\newtheorem{theorem}{Theorem}
\newtheorem{lemma}{Lemma}
\newtheorem{definition}{Definition}
\newtheorem{example}{Example}
\def\S{\ensuremath{\mathcal{S}}\xspace}
\def\Sp{\ensuremath{\mathcal{S}^\prime}\xspace}
\def\reopt1{Reopt-CSP\ensuremath{_{\mathcal{M}_+}}\xspace}
\title{On Self-Reducibility and Reoptimization of Closest Substring Problem}
\author{ Jeffrey A. Aborot,   Henry Adorna,  Jhoirene B. Clemente		\\
\affaddr{Algorithms and Complexity Laboratory}\\
		\affaddr{Department of Computer Science}\\
		\affaddr{College of Engineering}\\
		\affaddr{University of the Philippines Diliman}\\
		\email{jeffrey.aborot@up.edu.ph, ha@dcs.upd.edu.ph, jbclemente@up.edu.ph}
}
\begin{document}
\maketitle

\pagestyle{empty}
\thispagestyle{empty}

\begin{abstract}
In this paper, we define  the  reoptimization variant of the closest substring problem (CSP) under sequence addition. We show that, even with the additional information we have about the problem instance, the problem of finding a closest substring is still NP-hard. 
We investigate the combinatorial property of optimization problems called self-reducibility. We show that problems that are polynomial-time reducible to  self-reducible problems also exhibits the same property. We illustrate this in the context of CSP. We used the property to show that although we cannot improve the approximability of the problem, we can improve the running time of the existing PTAS for CSP.

\end{abstract}

\begin{keywords}
reoptimization, closest substring problem, approximation
\end{keywords}


\section{Introduction}
The \textit{consensus pattern problem} is a combinatorial problem applied to a wide range of applications from string matching problems in genomic sequences \cite{Bodlaender1995} to finding repeated patterns in graphs \cite{Dondi2013} and time-series databases \cite{Li2010}. Due to its utmost importance in the field of genomics, several efforts have been made to characterize the computational requirements needed to solve the problem. If the set of input instances is constrained only to a set of input strings, the problem is known as the \textit{closest substring problem} (CSP). The problem seeks to identify a pattern that occurs approximately in each of the given set of sequences. 

It has been shown that the CSP is NP-hard \cite{Garey1979}, i.e., unless P=NP, we cannot obtain a polynomial-time algorithm solving the problem. Moreover, it was also shown that  fixing  parameters such as pattern length or alphabet size, does not address the intractability of the problem \cite{Evans2003}. Aside from parameterization attempts, other studies tried to relax the condition of always finding the optimal solution by providing approximation algorithms for the problem. The first constant-factor approximation algorithm is presented in \cite{Lanctot1998}, then subsequently improved in  \cite{Li1999}. These results include CSP to the class of problems that are constant-factor approximable (APX). In addition to this, several studies  \cite{Li1999,Ma2000,Ma2008} even presented a \textit{polynomial-time approximation scheme} (PTAS) for the problem.

To address the intractability and to improve the solution quality of approximation algorithms, another approach is to use additional information about the problem instance whenever possible. A method called \textit{reoptimization} has already been applied to a variety of hard problems in the literature. The idea of reoptimization is to make use of a solution to a locally modified version of the input instance. It was shown that reoptimization can help to either improve the approximability and even provide a PTAS for some problems that are APX-hard. These results include improvements for the metric-traveling salesman problem \cite{Bockenhauer2008}, the Steiner tree problem \cite{Bilo2012}, the common superstring problem  \cite{Bilo2011}, and hereditary graph problems \cite{Boria2012}.

In this paper, we investigate whether reoptimization can help in approximating the closest substring problem. The additional information given in advance is a solution to a smaller instance of the problem. The algorithm for the reoptimization variant of the problem aims to make use of the given solution to become feasible for the larger instance. We show that, using the given solution as a greedy partial solution to the larger instance, we can achieve an additive error (with respect to the optimal solution) that grows linearly as we increase the number of sequences added to the original instance, which can be worse than the existing $\sigma$-approximation algorithms for the original problem. 

The self-reducibility property of some hard combinatorial problem has been showed to improve the approximability of the problem. However, existing general approaches incur a much longer time for providing a solution with improved quality. Providing an improved ratio is already possible for CSP, due to the existence of  a PTAS.   However, \cite{Wang2008} deemed the PTASs in the literature as impractical for small error bounds. In this paper, we showed that it is possible to improve the running time of the existing PTAS  for CSP while maintaining the same approximation ratio through reoptimization.

\section{Closest Substring Problem}
Given a set of sequences $\mathcal{S} = \{S_1, S_2, \ldots ,S_t \}$ defined over some alphabet $\Sigma$, where each $S_i$ is of length $n$, and for some $l \leq n$, find a string $v$ of length $l$ and a set of substrings $y_i$ each from $S_i$, where $v, y_i \in \Sigma^l$  such that the total Hamming distance $\sum\limits_{i}^t d(v, y_i)$ is minimized \cite{Garey1979}.

The closest substring of a given set of input sequences, may not be unique given the closest substring  occurrences $y_i$, for $ 1 \leq i \leq t$.  So instead of focusing on the closest substring $v$, we will use the collection of $y_i$'s to represent a feasible solution for the problem. We can represent a feasible solution $$SOL = (y_1, y_2, \ldots, y_t)$$ using a sequence of substrings obtained from each of the given sequences to \S. We may refer to SOL as a feasible solution and OPT as the optimal solution for input instance \S.

We can easily obtain a substring $v$ from the collection of occurrences SOL. This is done by aligning all substrings in SOL and taking the majority symbol for each column of the alignment. Let us call this string the \textit{consensus} of SOL. Moreover, the consensus of $OPT$ is called the closest substring of \S. It is easy to see that the consensus of OPT will minimize the total Hamming distance for all  $v \in \Sigma^l$. On the other hand, given a string $v \in \Sigma^l$, we can also obtain a set of occurrences SOL by searching the best aligning substrings from each of the sequences in \S. Given the closest substring, we can obtain OPT in $O(tn)$. 

Naively, we can search for the optimal solution by considering all possible locations of the closest substring in each sequence $S_i$ from $\mathcal{S}$. For smaller alphabet sizes, it is easier to search over all $\Sigma^l$ as compared to $O(n^t)$ possible occurrences. However, since we are considering the general CSP, we do not restrict the alphabet size of the given sequences.

\subsection{Related Works}
Algorithms for the CSP can be categorized into three classes: \textit{exact}, \textit{heuristics}, and \textit{approximate}. Exact algorithms always obtain the optimal solution, but are exhaustive and impractical due to the NP-hardness of the problem. Among the earliest exact algorithms, some use graphs to model the problem. This includes WINNOWER \cite{Pevzner2000} which involves finding cliques in graphs obtained from the set of sequences. An integer linear programming (ILP) formulation of the problem was also presented in \cite{Zaslavsky2006}.

Some algorithms make use of some strategy in searching through the set of all feasible solutions. Algorithms following this approach  are called \textit{heuristic algorithms}. A majority of these results use probabilistic models to represent solutions. 
When it is possible to guarantee the quality of the solution by means of identifying the bounds of its solution's cost, it is called an \textit{approximation algorithm}. Since we do not know the cost of an optimal solution, certain properties of the input instances and the problem itself are needed to design the algorithm. For the CSP, the first constant-factor $2$-approximation algorithm was presented in \cite{Lanctot1998}, which was subsequently  improved in \cite{Li1999}.

Due to the hardness results presented in \cite{Bodlaender1995}, several other efforts were made to identify  for which types of input instances the problem becomes easier to solve. A result in \cite{Brejova2006} shows that even if the set of input instances is defined over the binary alphabet, we still cannot obtain a practical polynomial-time algorithm for small error bounds.  Aside from characterizing input instances, one line of research focused on the \textit{parameterization} of the problem  \cite{Evans2003}. Based on these studies, it is shown that the problem is \textit{fixed-parameter intractable}, i.e., fixing a parameter such as the pattern length or alphabet size will not make the problem easier to solve.

A PTAS for a hard problem is set of polynomial-time algorithms  for which one can specify a certain guaranteed quality. However, since there is a trade-off between the quality of solution and the  running time of the algorithm, we can expect a longer running time for smaller error bounds. 

The summary of the approximation ratio and running time of the existing PTASs for CSP is shown in Table \ref{table:summary}. The second on the list is a randomized PTAS for CSP, while the third one assumes a general alphabet $\Sigma$. We can see that the approximation ratio of the third PTAS is only dependent on the sampling size $r$, because the alphabet size is compensated in the running time of the algorithm. In contrast, we have the alphabet size as a parameter in the approximation ratio of the first PTAS. 

\begin{table}
\centering
\resizebox{0.495\textwidth}{!}{  
\def\arraystretch{2}
\begin{tabular}{| c | c | c |}
\hline
\textbf{PTAS} & \textbf{Approximation Ratio} & \textbf{Running Time} \\ \hline
\textbf{[Li1999]} & $1 + \frac{4|\Sigma| - 4}{\sqrt{e} (\sqrt{4r+1}-3)} $ & $O(l(tn)^{r+1})$\\ \hline
\textbf{[Li1999]} & $1 + O(\sqrt{\frac{\log{r}}{r}})$ & $O(l(tn)^{r+1})$ \\ \hline
\textbf{[Ma2000]} & $1 + \frac{1}{2r-1} + 3 \epsilon r $ & $O(l (tn)^{r+1} |\Sigma|^{\sqrt{4/\epsilon^2} \log{tn}})$\footnote{where $0 < \epsilon \leq 1, {e}/{(1+\epsilon)^{1 + \frac{1}{\epsilon}}} \leq exp(-\frac{\epsilon}{3})$} \\ \hline 

%
%
\end{tabular} }
\caption{Summary of approximation ratio and running time of 
the existing PTASs for CSP in the literature.}
\label{table:summary}
\end{table}

We consider the first PTAS in Table \ref{table:summary} for our study, since we will  only focus on deterministic algorithms for reoptimization. Moreover, we assume that the set of input instances are obtained from  a general alphabet $\Sigma$ and so choosing the third PTAS may not be ideal in our case since it will not just have $r$ as a parameter but also the alphabet size.%

The first PTAS from \cite{Li1999} is shown in Algorithm \ref{alg:ptas_csp}. For each parameter $r$, it describes an approximation algorithm for CSP that outputs a solution SOL with  $$cost(SOL) \leq \left( 1 + \frac{4|\Sigma| - 4}{\sqrt{e}\sqrt{4r+1}-3} \right) \cdot cost(OPT)$$  in $O(l(tn)^{r+1})$ time.

An $r$-\textit{sample} from a given instance $\S$, i.e., 
$$r\text{-}sample(\S)\ = \{y_{i_1},  y_{i_2}, \ldots, y_{i_r}\},$$ is a collection of $r$ $l$-length substrings  from $\mathcal{S}$.  Repetition of substrings are  allowed for as long as no two substrings are obtained from the same sequence.  Let $R(\mathcal{S})$ denote the set of all possible  $r$-\textit{sample} from $\mathcal{S}$.  The total number of 
samples in $\mathcal{S}$ is  ${tn \choose r}$ which is bounded above by $O((tn)^r)$.  From each sample, the algorithm obtains a consensus pattern. Solution SOL is then derived by aligning the $t$ closest substrings from the given consensus. The Algorithm \ref{alg:ptas_csp} minimizes  through all possible $r$-\textit{sample} in \S  to provide a feasible solution with a guaranteed quality.

\begin{algorithm}
\textbf{Input:} {Set of sequences $\mathcal{S}$, {pattern length} $l$, {sampling size $r$}} \\
\textbf{Output:} {SOL and consensus string $v_{sol}$} 
\begin{algorithmic}[1]

\State{$min = \infty$}
\For{\textbf{each} $r$-$sample$ $\{y_{i_1}, \ldots, y_{i_r}\}$ from $\mathcal{S}$}
\State{$v =$ consensus pattern from $\{y_{i_1}, \ldots, y_{i_r}\}$}
\State{$SOL = \emptyset$ }
\For{\textbf{each} $S_i \in \mathcal{S}$}
\State{$y_i =$  $\min\limits_{\forall y_i \in S_i} d(v, y_i)$  }
\State{$SOL =$ merge $y_i$ to SOL }
\EndFor
\State{\textbf{return} SOL with $\min cost(SOL)$}
\EndFor
\end{algorithmic}
\caption{PTAS for the CSP [Li1999]}
\label{alg:ptas_csp}
\end{algorithm}

\section{Self-reducibility}  \label{sec:selfreducibility}
A solution to a combinatorial optimization problem is composed of a set of discrete elements called \textit{atoms} \cite{Zych2012}. For certain graph problems, it can be the set of all vertices or set of all edges, e.g., a clique in the maximum clique problem is a collection of vertices.

A problem $\Pi$ is said to be \textit{self-reducible} if there is a polynomial-time algorithm, $\Delta$, satisfying the following conditions \cite{Zych2012}.
\begin{enumerate}
\item Given an instance $I$ and an atom $\alpha$ of a solution to $I$, $\Delta$ outputs an instance $I_\alpha$. We require that the size of $I_\alpha$ is smaller than the size of $I$, i.e., $|I_\alpha| < |I|$. Let $\mathcal{R}(I|\alpha) $ represent the set of feasible solutions to $I$ containing  the atom $\alpha$. We require that every solution SOL of $I_\alpha$, i.e., $SOL \in \mathcal{R}(I_\alpha)$, has a corresponding $SOL \cup \{\alpha\} \in \mathcal{R}(I|\alpha)$ and that this correspondence is one-to-one.
\item For any set $H \in \mathcal{R}(I_\alpha)$ it holds that the 
$$cost(I, H \cup  \{\alpha\}) =   cost(I_\alpha, H)+ cost(I, \alpha).$$
\end{enumerate}

Given the properties of a self-reducible problem, we prove that the following lemma is true. 

\begin{lemma}
CSP is \textit{self-reducible}.
\end{lemma}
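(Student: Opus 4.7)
The plan is to establish self-reducibility by taking the atoms of a CSP solution to be the individual closest-substring occurrences $y_j$, since a feasible solution $SOL = (y_1, \ldots, y_t)$ is precisely a set of such atoms. Given an instance $I = \mathcal{S} = \{S_1, \ldots, S_t\}$ and an atom $\alpha = y_j$ drawn from $S_j$, I would let the reduction algorithm $\Delta$ output the instance $I_\alpha$ obtained by deleting $S_j$ from $\mathcal{S}$ and, to preserve the cost decomposition below, carrying $\alpha$ forward as a distinguished length-$l$ input sequence. This construction is clearly polynomial time.

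First I would check the size requirement: $I_\alpha$ has $t-1$ full-length sequences plus the length-$l$ string $\alpha$, and since $l < n$ we have $|I_\alpha| = (t-1)n + l < tn = |I|$. Next I would verify the one-to-one correspondence between $\mathcal{R}(I_\alpha)$ and $\mathcal{R}(I|\alpha)$: every feasible solution of $I_\alpha$ is a tuple $H = (y_i)_{i \neq j}$ together with the forced substring $\alpha$ drawn from the distinguished sequence, and the map $H \mapsto H \cup \{\alpha\}$ yields a feasible solution for $I$ that contains $\alpha$ by construction. Conversely, any feasible solution of $I$ containing $\alpha$ restricts uniquely to a feasible $H$ for $I_\alpha$, so the map is a bijection.

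For the additive cost condition I would exploit the fact that, once the union $H \cup \{\alpha\}$ is fixed, the consensus string $v$ of the solution is determined and the Hamming-distance cost splits trivially by linearity of summation:
\begin{equation*}
\sum_{i=1}^t d(v, y_i) \;=\; d(v, \alpha) \;+\; \sum_{i \neq j} d(v, y_i).
\end{equation*}
Reading the first term as $cost(I, \alpha)$ and the second as $cost(I_\alpha, H)$ under the same consensus $v$ delivers the required identity $cost(I, H \cup \{\alpha\}) = cost(I_\alpha, H) + cost(I, \alpha)$ of condition~(2).

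The hard part is precisely this last step: a priori, the consensus of $H$ alone in a naive smaller instance need not agree with the consensus of $H \cup \{\alpha\}$ in $I$, which would spoil the additive decomposition for the usual CSP cost. Keeping $\alpha$ inside $I_\alpha$ as a forced length-$l$ sequence forces both consensuses to coincide, so I expect the bulk of the careful argument to be in justifying this modelling choice and in checking that the decomposition then holds uniformly for \emph{every} $H \in \mathcal{R}(I_\alpha)$ rather than merely for the optimum.
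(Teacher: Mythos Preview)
Your approach shares the paper's core idea: atoms are the $l$-length substrings $y_j$, and the reduction deletes the sequence $S_j$ from which the chosen atom is drawn. The paper, however, takes the simpler route and sets $I_\alpha = \mathcal{S}\setminus\{S_j\}$ \emph{without} carrying $\alpha$ forward as a distinguished sequence. For the cost condition it simply writes
\[
cost(\mathcal{S},\, SOL_i \oplus y_i) \;=\; \sum_{j\neq i} d(v, y_j) \;+\; d(v, y_i),
\]
where $v$ is the consensus of the full tuple $SOL_i \oplus y_i$ in \emph{both} summands; it does not require that $cost(I_\alpha, H)$ be computed with an independent consensus of $H$ alone. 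In other words, the paper sidesteps exactly the issue you flagged as ``the hard part'' by reading $cost(I_\alpha, H)$ and $cost(I,\alpha)$ as the two pieces of the cost of $H\cup\{\alpha\}$ under its own consensus, rather than as standalone CSP costs on separate instances.

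Your device of retaining $\alpha$ inside $I_\alpha$ is a reasonable attempt to make the two consensuses coincide, but it creates a small mismatch with the formal template: a feasible $H\in\mathcal{R}(I_\alpha)$ then already contains $\alpha$ (the only $l$-substring of the distinguished sequence), so the prescribed map $H\mapsto H\cup\{\alpha\}$ no longer adjoins a fresh atom, and the correspondence with $\mathcal{R}(I\mid\alpha)$ becomes an identification rather than the intended one-step extension. Note also that your size bound $(t-1)n + l < tn$ requires $l<n$ strictly, whereas the problem allows $l\le n$. Neither point is fatal, but the paper's plainer reduction $\Delta(\mathcal{S},y_j)=\mathcal{S}\setminus\{S_j\}$ avoids both.
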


\begin{proof}
With the assumption that the pattern length $l$ is constant, a valid input instance $I$ to CSP is a set of $t$ sequences $\mathcal{S}$. A feasible solution $SOL \in \mathcal{R}(\mathcal{S})$ is an ordered set  $SOL = (y_1, y_2, \ldots, y_t)$. The set of atoms in CSP are all possible $l$-length substrings in $\mathcal{S}$.
Let us define a reduction function $\Delta(\mathcal{S}, y_i)$, which returns a reduced instance $\mathcal{S} \setminus \{S_i\}$.
The reduced instance is derived by removing one sequence $S_i$  where $y_i$ is obtained. We argue next that $\Delta(\mathcal{S}, y_i)$ has the following properties.

\begin{enumerate}
\item For a $SOL \in \mathcal{R}(\mathcal{S})$ there is a corresponding  $SOL_\alpha \cup \{\alpha\} \in \mathcal{R}(I| \alpha)$. Note that,  $SOL = (y_1, y_2, \ldots, y_t)$ guarantees at least one occurrence of the closest substring per sequence. For any atom $y_i$, a solution $SOL \in \mathcal{R}(\mathcal{S}|y_i)$ corresponds to $SOL_{i} \oplus  y_i $, i.e.,
 $$(y_1,y_2, \ldots, y_{i-1}, y_{i+1}, \ldots, y_t) \oplus y_i \in \mathcal{R}(\mathcal{S} | y_i ).$$
Instead of using the operation 
`$\cup$' for sets, we used $\oplus$ to denote the merge of an element to a sequence.
\item For any feasible solution $SOL_i$ to the reduced instance $\mathcal{S} \setminus \{S_i\}$, we can obtain a feasible solution $SOL_i\oplus y_i$ for $\mathcal{S}$ with $cost(\mathcal{S}, SOL_i\oplus y_i) $  equal to the sum of $ cost(\mathcal{S}\setminus \{S_i\}, SOL_i) $ and $cost(\mathcal{S}, y_i) $, since\\

\resizebox{0.46\textwidth}{!}{
\begin{tabular}{l l}
$cost(SOL_i\oplus y_i)$ & $= cost(\mathcal{S}\setminus \{S_i\}, SOL_i) + cost(\mathcal{S}, y_i)$ \\
& $ =  \sum\limits_{\substack{i=1\\ j \neq i}}^t d(v, y_j)  +  d(v, y_i)$,\\
\end{tabular}
}
where $v$ is the consensus of $SOL_i\oplus y_i$.
\end{enumerate}
\end{proof}

Let us use the concept of polynomial-time reduction for combinatorial problems. We say that a problem $A$ is polynomial-time reducible to $B$, denoted by $A \leq_P B$, if $\exists $ a polynomial-time transformation $f$, which for every input 
$$x \in A \leftrightarrow f(x) \in B.$$
In other words, in order to solve problem $A$, we must at least solve problem $B$. The problem of finding the closest substring is reduced to a graph problem in \cite{Pevzner2000}. The transformation is as follows. Given a set of sequences $\mathcal{S}$ and a pattern length $l$, an edge weighted  $t$-partite graph $G_\mathcal{S} = (V, E, c)$ is obtained, where the problem is reduced to finding  a minimum weighted clique on a $t$-partite graph (MWCP). Each substring in $\mathcal{S}$ represents a vertex in $G_\mathcal{S}$. A part $V_i \subset V$ represents the set of vertices obtained from a single sequence $S_i \in \mathcal{S}$. A vertex $v \in V_i$ is connected to all other vertices in $V$ except those belonging to $V_i$. The cost defined by function $c:(V \times V) \rightarrow \mathbb{Z}^+$ between two vertices can be interpreted as the Hamming distance between two substrings in $\mathcal{S}$. The cost of a clique in $G_\mathcal{S}$, which is computed by getting the sum of all the edges is equal to sum of all pairwise Hamming distances of substrings in SOL in $\mathcal{S}$. Moreover, we can create an instance $G_\mathcal{S}$ from $\mathcal{S}$ in polynomial-time. Thus, showing a polynomial-time reduction from CSP to MWCP. 

It is also shown that MWCP has an exact reduction to  Minimum Weighted Independent Set Problem (MWISP), since a clique in a graph is an independent set in the corresponding complement of the graph \cite{Garey1979}. In line with this we would like to cite the following Lemma from \cite{Zych2012}. 

\begin{lemma}
Maximum Weighted Independent Set Problem (MWISP) is \textit{self-reducible}.
\end{lemma}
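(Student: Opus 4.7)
The plan is to exhibit a polynomial-time reduction $\Delta$ on MWISP instances that satisfies the two self-reducibility conditions. I treat an MWISP instance as a vertex-weighted graph $G = (V, E, w)$ with $w: V \to \mathbb{Z}^+$, the feasible solutions $\mathcal{R}(G)$ as the independent sets of $G$, and the atoms as the vertices (with $cost(G, v) = w(v)$ and $cost(G, H) = \sum_{u \in H} w(u)$).

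First, I would define $\Delta(G, v) := G[V \setminus N[v]]$, the subgraph induced on the complement of the closed neighborhood $N[v] = \{v\} \cup N(v)$. This is computable in polynomial time and strictly reduces the instance size, since at least the vertex $v$ itself is deleted; hence $|\Delta(G, v)| < |G|$.

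The bulk of the proof is verifying condition (1), the bijection between $\mathcal{R}(\Delta(G, v))$ and $\mathcal{R}(G \mid v)$. I would exhibit the mutually inverse maps $H \mapsto H \cup \{v\}$ and $I \mapsto I \setminus \{v\}$. For the forward direction, every vertex of $H$ lies outside $N(v)$, so adjoining $v$ creates no edge, and since independence of $H$ in the induced subgraph $\Delta(G, v)$ is inherited from $G$, the set $H \cup \{v\}$ is independent in $G$ and contains $v$. For the reverse direction, any independent set $I$ containing $v$ cannot contain any neighbor of $v$, hence $I \setminus \{v\} \subseteq V \setminus N[v]$, and this set is independent in $\Delta(G, v)$ because taking induced subgraphs preserves independence. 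The two maps are visibly inverse, establishing the required one-to-one correspondence.

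Condition (2) is then essentially bookkeeping: since $v \notin H$ (as $v$ is not even a vertex of $\Delta(G, v)$), the additivity of the vertex-weight function gives $cost(G, H \cup \{v\}) = w(H) + w(v) = cost(\Delta(G, v), H) + cost(G, v)$. The main obstacle, to the extent there is one, is choosing the reduction correctly: deleting only $v$ would fail to guarantee that $H \cup \{v\}$ remains independent, while deleting the closed neighborhood $N[v]$ is exactly what makes the bijection in condition (1) work. Once $\Delta$ is defined this way, both conditions follow from routine set-theoretic verification together with the additivity of $w$.
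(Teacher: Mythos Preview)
Your argument is correct: defining $\Delta(G,v)=G[V\setminus N[v]]$ and checking the two conditions via the maps $H\mapsto H\cup\{v\}$ and $I\mapsto I\setminus\{v\}$ is exactly the standard self-reducibility proof for independent set, and every step you outline goes through. Note, however, that the paper does not actually prove this lemma; it merely cites it from \cite{Zych2012}, so there is no in-paper proof to compare against---your closed-neighborhood deletion is the canonical argument and is presumably what the cited reference contains.
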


Since we proved earlier that CSP is self-reducible and we know that there is a polynomial-time reduction from CSP to MWCP, which is equivalent to a self-reducible problem MWISP, we are interested to know if all problems that are polynomial-time reducible to self-reducible problems also exhibit the same property.

\begin{theorem} \label{thm:gen_selfreduc}
If problem $A$ is polynomial-time reducible to problem $B$ ($A \leq_{P} B$), and $B$ is self-reducible, then  $A$ is self-reducible. 
\end{theorem}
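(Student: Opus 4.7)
The plan is to construct the self-reducing algorithm $\Delta_A$ for problem $A$ by composing the polynomial-time reduction $f: A \to B$ with the self-reducing algorithm $\Delta_B$ guaranteed by the hypothesis that $B$ is self-reducible. Concretely, given an instance $I$ of $A$ together with an atom $\alpha$ of a solution to $I$, I would first compute the corresponding $B$-instance $f(I)$, exploit the solution structure preserved by the reduction to map $\alpha$ to a corresponding atom $\alpha'$ of a solution to $f(I)$, and then set $\Delta_A(I,\alpha) := \Delta_B(f(I),\alpha')$.

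Verifying the two self-reducibility conditions then proceeds as follows. First, $\Delta_A$ runs in polynomial time because $f$ and $\Delta_B$ both do, so their composition does. Second, the bijection $\mathcal{R}(f(I)_{\alpha'}) \leftrightarrow \mathcal{R}(f(I) \mid \alpha')$ supplied by $\Delta_B$, together with the solution-preserving nature of $f$, lifts to a bijection $\mathcal{R}(\Delta_A(I,\alpha)) \leftrightarrow \mathcal{R}(I \mid \alpha)$. Finally, the cost-additivity requirement $cost(I, H \cup \{\alpha\}) = cost(\Delta_A(I,\alpha), H) + cost(I, \alpha)$ follows by chaining the cost-additivity that $\Delta_B$ already satisfies on the $B$-side with the cost-preservation of the reduction $f$.

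The hardest part will be the size requirement $|\Delta_A(I,\alpha)| < |I|$. Directly one only obtains $|\Delta_B(f(I),\alpha')| < |f(I)|$, and since $f$ may blow up the encoding polynomially, one application of $\Delta_B$ need not shrink the size below $|I|$. I expect to resolve this by observing that, under a concrete reduction such as the CSP-to-MWCP transformation, a single atom $\alpha$ on the $A$-side typically corresponds to a whole collection of atoms on the $B$-side (one substring becomes an entire set of vertices in the $t$-partite graph), so applying $\Delta_B$ iteratively on each such atom yields a reduced $B$-instance whose size strictly decreases by the amount $f$ contributed, ultimately dropping below $|I|$. Formalizing this step is where implicit assumptions on $f$ (beyond a bare Karp reduction, namely that it be solution- and cost-preserving and only mildly expanding) must be made explicit, and it is where I would concentrate the most effort.
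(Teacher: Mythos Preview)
Your approach is essentially the paper's: transport the $A$-instance and atom through $f$, apply $\Delta_B$, and declare the result to be the reduced instance. The paper phrases this as $\Delta_B(f(I_A), f(\alpha_A)) = f(I_{\alpha_A})$ and then simply asserts that the two self-reducibility conditions ``clearly'' carry over and that polynomial time follows because $f$ is polynomial-time computable; it gives no further verification.

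Two remarks. First, a small type issue: as written, your $\Delta_A(I,\alpha) := \Delta_B(f(I),\alpha')$ returns a $B$-instance, but $\Delta_A$ must output an instance of $A$. The paper handles this implicitly by assuming the reduced $B$-instance lies in the image of $f$ and calling its preimage $I_{\alpha_A}$; you will need the same (unstated) invertibility assumption, or else to argue that working with $B$-instances suffices. Second, the difficulties you flag --- the size requirement $|\Delta_A(I,\alpha)| < |I|$, the need for $f$ to preserve atoms, solutions, and costs, and the possibility that $f$ expands encodings --- are real, and the paper does not address any of them. So your caveat that ``implicit assumptions on $f$ beyond a bare Karp reduction must be made explicit'' is more rigorous than, not a deficiency relative to, the paper's own argument.
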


\begin{proof}
Let $A = (\mathcal{D}_A, \mathcal{R}_A, cost_A, goal_A)$ and \\
$B = (\mathcal{D}_B, \mathcal{R}_B, cost_B, goal_B)$ be two NP optimization problems. Let $I_A \in \mathcal{D}_A$, $SOL_A \in \mathcal{R}(I_A)$, where $SOL_A$ is composed of atoms $\alpha_A$. Similarly, let $I_B \in \mathcal{D}_B$, $SOL_B \in \mathcal{R}(I_B)$, where $SOL_B$ is composed of atoms $\alpha_B$.

Given that $A \leq_P B$, then by definition, there exists a polynomial-time computable function $f$ such that  for every instance $I_A$ for problem $A$, $f(I_A)$ is an instance of problem $B$ and for every solution $SOL_A$ to $A$, $f(SOL_A) \in \mathcal{I_B}$. Equivalently, as a decision problem, the polynomial-time reducibility implies
$$(I_A, SOL_A) \in (\mathcal{D}_A \times \mathcal{R(I_A)}) $$ $$\leftrightarrow $$ $$(f(I_A), f(SOL_A)) \in (\mathcal{D}_B \times \mathcal{R(I_B)})  $$
By definition of self-reducibility, if $B$ is self-reducible then we can obtain a self-reduction function $\Delta_B(I_B, \alpha_B) = I_{\alpha_B}$, such that $|I_{\alpha_B}| < |I_B|$ and the following conditions hold.
\begin{enumerate}

\item For $SOL_B \in \mathcal{R}(I_B)$ there is a corresponding  $SOL_{\alpha_B} \cup \{\alpha_B\} \in \mathcal{R}(I_B| \alpha_B) $, where $SOL_{\alpha_B} \in \mathcal{R}(I_{\alpha_B})$.

\item For a subset of atoms $H_B \subseteq \mathcal{R}(I_{\alpha_B})$,   $cost_B(I_B, H_B \cup  \{\alpha_B\}) = cost_B(I_B, \alpha_B) + cost_B(I_{\alpha_B}, H_B)$.  
\end{enumerate}

Given that $B$ is self-reducible, we need to show that problem $A$ is also self-reducible.
If so, we must construct a self-reduction function $\Delta_A(I_A, \alpha_A) = I_{\alpha_A}$ that follows the two properties.
 
Given the polynomial-time function $f$ and the self-reduction function $\Delta_B$, we realize $\Delta_A$ using $\Delta_B$ through the following,
 $$\Delta_B(f(I_A), f(\alpha_A)) = f(I_{\alpha_A}).$$ 
The self-reduction function for $\Delta_A$ clearly inherits the two conditions because of our premise that $B$ is self-reducible. Moreover, the reduction function runs in polynomial-time since $f$ is polynomial-time computable.
 \end{proof}
 
Theorem \ref{thm:gen_selfreduc} is presented in the hope that we can use the current approaches for the reoptimization variants of  clique and independent set problem for providing improvements for CSP. It is shown that for some defined reoptimization variant of clique and independent set, a general method is shown to improve the approximability with trade-off on the running time of the approximation algorithm.

\section{Reoptimization}

For real-world applications, additional information about the problems we are solving often is available and so we may not have to solve them from scratch. One of the approaches is to make use of a priori information, which can be a solution to a smaller input instance of a problem to solve a larger instance of it. This approach is called \textit{reoptimization}. The idea was first mentioned in \cite{Schaffter1997}. For some problems, we can transform the given optimal solution so that it may become feasible for the modified instance in polynomial-time. Furthermore, this approach can help to improve the approximability of the problem or the running time of the algorithms solving it. In fact, we can obtain a PTAS for a reoptimization variant of some problem given that the unmodified problem has a constant-factor approximation algorithm \cite{Bockenhauer2008}. 

In the reoptimization variant of any problem, it is important to define precisely the modification relation $\mathcal{M}$ over the set of input instances.  For graph problems, common modifications involve addition/deletion of edges and vertices. Other types of modification involve changes in the edge/vertex weights. A simple definition of reoptimization is as follows.\\

\noindent INPUT: Original instance $I$, its optimal solution OPT, and a modified instance $I^\prime$, where $(I,I^\prime) \in \mathcal{M}$ \\
\noindent OUTPUT: Solution $SOL^\prime$ to $I^\prime$.\\

For the CSP, we consider the basic type of modification where one  or several sequences are added to $\mathcal{S}$. Since the length of the pattern remains unchanged, we will only define the modification relation over the given set of sequences. When a single sequence is added to $\mathcal{S}$, we have the modification relation $ \mathcal{M}_+$, where $(\mathcal{S},\mathcal{S}^\prime) \in \mathcal{M}_+$, if $\mathcal{S}^\prime= \mathcal{S} \cup \{S_{t+1}\}$ and $S_{t+1} \notin \mathcal{S}$. As a generalization, we define the modification relation $\mathcal{M}_{k^+}$ to denote addition of $k$ sequences to $\mathcal{S}$, i.e., $(\mathcal{S}, \mathcal{S}^\prime) \in \mathcal{M}_{k^+}$, if $\mathcal{S}^\prime = \mathcal{S} \cup \{S_{t+1}, S_{t+2}, \ldots, S_{t+k}\}$ and $S_{t+i} \notin \mathcal{S}$, for $1 \leq i \leq k$.

Let us define the reoptimization variant of the CSP under single sequence addition. In the following definition, we can see that we already have the optimal solution with closest substring pattern $v_{opt}$, of the original instance $\mathcal{S}$. Note that we can easily compute occurrences of $v_{opt}$ in $\mathcal{S}$, i.e., the set of substrings $y_i$, each from $S_i$ and their positions where $v_{opt}$ is obtained.
\begin{definition}{$Reopt$-$CSP_{\mathcal{M}_+}$} \ \\ 
INPUT: Pattern length $l$, original instance $\mathcal{S}$, the optimal closest substring $v_{opt}$ of $\mathcal{S}$, and a modified instance $\mathcal{S}^\prime$ where $(\mathcal{S}, \mathcal{S}^\prime) \in  \mathcal{M}_+$\\
\noindent OUTPUT: Solution $v^\prime_{sol}$ to the modified instance $\mathcal{S}^\prime$.
\end{definition}
Even with the additional information we have, the reoptimization variant of the problem is still NP-hard, i.e., no polynomial-time algorithm exists to obtain the optimal solution for $\mathcal{S}^\prime$, unless $P=NP$. 
\begin{theorem} 
$Reopt$-$CSP_{\mathcal{M}_+}$ is NP-hard.
\label{thm:hard}
\end{theorem}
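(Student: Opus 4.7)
The plan is to establish NP-hardness via a polynomial-time Turing reduction from CSP itself, exploiting the self-reducibility of CSP shown in Lemma 1. The crucial observation is that a single-sequence CSP instance is trivially optimal: when $\mathcal{S} = \{S_1\}$, any $l$-length substring of $S_1$ together with itself as the consensus yields cost $0$, which is necessarily optimal. This gives a free ``base case'' of the reoptimization chain.

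Assume, for contradiction, that a polynomial-time algorithm $A$ solves \reopt1. Given an arbitrary CSP instance $\mathcal{S} = \{S_1, \ldots, S_t\}$ with pattern length $l$, I would proceed iteratively: initialize $\mathcal{S}^{(1)} = \{S_1\}$ along with its trivial optimal pattern $v^{(1)}$; then for each $i$ from $1$ to $t-1$, set $\mathcal{S}^{(i+1)} = \mathcal{S}^{(i)} \cup \{S_{i+1}\}$ so that $(\mathcal{S}^{(i)}, \mathcal{S}^{(i+1)}) \in \mathcal{M}_+$, and invoke $A$ on the tuple $(l, \mathcal{S}^{(i)}, v^{(i)}, \mathcal{S}^{(i+1)})$ to obtain the optimal closest substring $v^{(i+1)}$ of $\mathcal{S}^{(i+1)}$. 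After $t-1$ invocations, $v^{(t)}$ is an optimal closest substring for the original instance $\mathcal{S}$, yielding a polynomial-time algorithm for CSP and contradicting its NP-hardness unless $P = NP$.

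The main obstacle will be the careful polynomial-time bookkeeping together with the technical requirement $S_{i+1} \notin \mathcal{S}^{(i)}$ imposed by $\mathcal{M}_+$. Each call to $A$ runs in time polynomial in its input size, which grows only polynomially across iterations, and there are at most $t-1$ invocations, so the overall procedure is polynomial in $|\mathcal{S}|$. If a duplicate sequence happens to appear in the original CSP instance, it can be skipped in the iteration without changing the optimum, keeping the reduction valid. Overall, the argument amounts to applying the self-reducibility of CSP inductively from the trivial one-sequence base case: the optimal solution to a $t$-sequence instance of CSP can be obtained by chaining $t-1$ reoptimization calls, so \reopt1 cannot be in $P$ unless $P = NP$.
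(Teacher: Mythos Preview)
Your proof is correct and follows essentially the same strategy as the paper: assume a polynomial-time algorithm for \reopt1, then iteratively extend a trivially solvable base case one sequence at a time to solve an arbitrary CSP instance, contradicting the NP-hardness of CSP. The only minor difference is that the paper starts the chain from a two-sequence instance (solved exhaustively in $O(l n^{2})$) rather than your single-sequence instance with trivial cost~$0$; your handling of the duplicate-sequence technicality is an extra bit of rigor the paper omits.
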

\begin{proof}
Towards contradiction, assume that $Reopt$-$CSP_{\mathcal{M}_+}$ is polynomial-time solvable. We will make use of the polynomial-time algorithm for $Reopt$-$CSP_{\mathcal{M}_+}$ to solve the CSP. 

We start by showing that for two sequences, both of length $n$, in $\mathcal{S}$, we can obtain the optimal closest substring $v_{opt}$ in polynomial-time. We can obtain $v_{opt}$ in $O(l \cdot n^2)$ by exhausting all possible $l$-length patterns in both sequences.

By making use of the optimal solution $v_{opt}$ and a polynomial-time algorithm for $Reopt$-$CSP_{\mathcal{M}_+}$, we can solve the CSP for any number of sequences in $\mathcal{S}$ in polynomial-time by adding one sequence at a time. But then, we know that closest substring is NP-hard. Therefore, $Reopt$-$CSP_{\mathcal{M}_+}$ is also NP-hard.

\end{proof}

\subsection{Approximation Algorithms for CSP}

For the purpose of our discussion, we may refer to OPT and $OPT^\prime$ to be the optimal solution of the smaller instance and the larger instance of the problem, respectively. The output of the presented approximation algorithms is denoted by $SOL^\prime$, unless otherwise stated. 

First, we show how we can use a  simple algorithm to give an approximate solution for $Reopt$-$CSP_{\mathcal{M}_+}$. The algorithm  makes use of the given solution as a greedy partial solution to the larger instance. 

\begin{algorithm} 
\caption{Given a sequence $S_{t+1} \in \Sigma^n$ and solution SOL of input instance $\mathcal{S}$, procedure BEST-ALIGN produces a feasible solution $SOL^\prime$ by aligning the closest substring from $S_{t+1}$ to $v_{sol}$}
\label{alg:best_align}
\begin{algorithmic}[1]
\Procedure{BEST-ALIGN}{$SOL, S_{t+1}$}
\State{$v_{sol} =$ Consensus pattern from SOL}
\State{$min = \infty$ }
\For{\textbf{each} $l$-length substring $x$ of $S_{t+1}$} 
\If{$d(v_{sol},x) < \ min$ }
\State{$min = cost(x)$}
\State{$y_{t+1} = x$}
\EndIf
\EndFor
\State{$SOL^\prime = (y_1, \ldots, y_t,  y_{t+1})$}
\State{\textbf{return} $SOL^\prime$}
\EndProcedure
\end{algorithmic}
\end{algorithm}

Algorithm \ref{alg:best_align} searches for the best aligning substring $y_{t+1}$ to the given solution. In adding one sequence to $\mathcal{S}$, the computed solution may or may not be the optimal solution for the larger instance. 

For the first case, if OPT is subset of $OPT^\prime$ for $\mathcal{S}^\prime$, then the result of Algorithm \ref{alg:best_align} will yield the optimal solution for $\mathcal{S}^\prime$. Otherwise, there exists a non optimal solution for $\mathcal{S}$ that is part of the optimal solution for $\mathcal{S}^\prime$, i.e., $\exists SOL \subset  OPT^\prime$. The solution $SOL^\prime$ of Algorithm \ref{alg:best_align} is obtained by merging the given optimal solution OPT with the best possible aligning substring in the new sequence. Algorithm \ref{alg:best_align} obviously runs in linear time with respect to the length of the additional sequence $ S_{t+1}$. 
To illustrate that we cannot always get the optimal solution using Algorithm \ref{alg:best_align}, let us consider the following example.

\begin{example} \label{ex:1}
Let $\mathcal{S} = \{S_1, S_2, S_3, S_4 \}$ with $S_5$ as the additional sequence for the modified input $\mathcal{S}^\prime$. If we are looking for a closest substring of length $l = 4$, the optimal solution for $\mathcal{S}$ is  $v_{opt}: AAAA $.  

\begin{center}
\begin{tabular}{ r  c c c c c c c c}
$S_1$: &  {\bf A} & {\bf A} & {\bf A} & {\bf A} & B & B & B & B \\
$S_2:$ & B & B & B & B &  {\bf A} & {\bf A} & {\bf A} & {\bf A}  \\
$S_3: $ & {\bf A} & {\bf A} & {\bf A} & {\bf A} & B & B & B & A \\
$S_4:$ & B & B & B & B &  {\bf A} & {\bf A} & {\bf A} & {\bf A}  \\ \\

$S_5: $ &  {\bf B} & {\bf B} & {\bf B} & {\bf B} & B & B & B & B \\
\end{tabular}
\end{center}

Algorithm \ref{alg:best_align} will return a solution by aligning the most similar substring $y_5: BBBB$ from the new sequence $S_5$. The solution produced by Algorithm \ref{alg:best_align} will have $cost(SOL^\prime) = 4$. On the other hand,  the optimal solution for $\mathcal{S}^\prime$ is $v^\prime_{opt}: BBBB$ with $cost(OPT^\prime) = 1$. 
\end{example}

In this case, the subset of the optimal solution for the larger instance is not the optimal solution for the smaller instance.  Algorithm \ref{alg:best_align} obviously runs in linear time with respect to the length of the additional sequence $ S_{t+1}$, multiplied by getting the cost of each substring $x$. 


%

To get the quality of the approximation algorithm, we need to compare $cost(OPT^\prime)$ and $cost(SOL^\prime)$. Using Algorithm \ref{alg:best_align} and the second property of self-reducibility, we have
$$cost(SOL^\prime) = cost(OPT) + d(v_{opt}, y_{t+1}),$$ 
where $y_{t+1}$ is the best aligning substring to $v_{opt}$ from the new sequence. We know that the cost of the optimal solution for the smaller instance is less than or equal to the cost of the optimal solution for the larger instance, i.e., $cost(OPT) \leq cost(OPT')$. Therefore,
$$cost(SOL^\prime)  \leq cost(OPT^\prime)+ d(v_{opt}, y_{t+1}).$$
In the worst case scenario, we can get $$cost(SOL^\prime) \leq cost(OPT^\prime) + l .$$ 
Since we only added a single sequence, the quality of the solution depends solely on how long the pattern is.  Using Algorithm \ref{alg:best_align}, we can get  a feasible solution for $\mathcal{S^\prime}$ in $O(ln)$. Compared to the $2$-approximation algorithm for CSP with running time $O(l(tn)^2)$ from \cite{Lanctot1998}, our approach can benefit an improved approximation ratio and running time for instances $\mathcal{S}^\prime$ where $OPT^\prime < l$.
Note that, the first property of the self-reducibility of CSP allows us to provide a feasible solution for the modified instance by extending the additional information that we have. We illustrate in the following generalization that it is possible to produce a feasible solution even if we add $k$-sequences to $\mathcal{S}$. 
 


\subsection{Generalization}
In this section, we consider the extension of the reoptimization variant where, instead of adding one sequence, the modification relation, denoted by $\mathcal{M}_k^+ $, is characterized by adding $k$ sequences to the original instance. The definition of the generalized reoptimization version under sequence addition is as follows.
\begin{definition}{Reopt-CSP$_{\mathcal{M}_k^+}$} \ \\ 
INPUT: Pattern length $l$, original instance $\mathcal{S}$, the optimal closest substring $v_{opt}$ of $\mathcal{S}$, and a modified instance $\mathcal{S}^\prime$ where $(\mathcal{S}, \mathcal{S}^\prime) \in  \mathcal{M}_k^+$\\
\noindent OUTPUT: Solution $v^\prime_{sol}$ to the modified instance $\mathcal{S}^\prime$.
\end{definition}
Since $Reopt$-$CSP_{\mathcal{M}_k^+}$ is a generalization of $Reopt$-$CSP_{\mathcal{M}_+}$, where $k=1$, we no longer need to show that this variant is also NP-hard. To give an approximate solution, we  can generalize Algorithm \ref{alg:best_align} for $Reopt$-$CSP_{\mathcal{M}_k^+}$. 

\begin{algorithm} 
\caption{Given $k$ additional sequences $\{S_{t+1}, \ldots S_{t+k}\}$, where each $ S_{t+i} \in \Sigma^n$ and solution SOL for input instance $\mathcal{S}$, procedure K-BEST-ALIGN produces a feasible solution $SOL^\prime$ by aligning the closest substrings from each of the additional sequences to $v_{sol}$}\label{alg:k_best_align}
\begin{algorithmic}[1]
\Procedure{K-BEST-ALIGN}{$SOL, \{S_{t+1}, \ldots S_{t+k}\}$}
\State{$v_{sol} =$ Consensus pattern from SOL}
\For{\textbf{each} $i$   in $1$ to $k$}
\State{$min = \infty$ }
\For{\textbf{each} $l$-length substring $x$ of $S_{t+i}$} 
\If{$d(v_{sol},x) < \ min$ }
\State{$min = cost(x)$}
\State{$y_{t+i} = x$}
\EndIf
\EndFor
\EndFor
\State{$SOL^\prime = (y_1, \ldots, y_t,  y_{t+1}, \ldots, y_{t+k})$}
\State{\textbf{return} $SOL^\prime$}
\EndProcedure
\end{algorithmic}
\end{algorithm}

Given $v_{opt}$, we can give a feasible solution for $\mathcal{S}^\prime$ by getting the best aligning substring $y_{t+i}$ from each of the newly added sequences. Let $Y$ be the set of substrings $y_{t+i}$ and let 
$$cost(Y) = \sum\limits_{i=1}^k d(v_{opt}, y_{t+i})$$
be the contribution of the set $Y$ to $SOL^\prime$. If $v_{opt}$ is the same as $v^\prime_{opt}$, then we can always guarantee optimality, otherwise
$$cost(SOL^\prime) \leq cost(OPT^\prime) + cost(Y).$$
In the worst case scenario, we can have
$$cost(SOL^\prime) \leq cost(OPT^\prime) +kl.$$
The extension of Algorithm \ref{alg:k_best_align} for $Reopt$-$CSP_{\mathcal{M}_k^+}$ can produce a feasible solution in $O(kln)$. However, the quality of the solution degrades as we increase the number of sequences added to the original instance. A general method for reoptimization is applied to several self-reducible hard problems to further improve the approximability of the problem but with tradeoff on the running time of the algorithm. The PTAS for the CSP already provides the option of improving the approximation ratio by increasing  the sampling size $r$ in Algorithm \ref{alg:ptas_csp}. As $r$ approaches $t$,  we can have a solution with error that converges to $0$ but with running time that is comparable to the naive exhaustive search that is exponential in $t$. In the following section, we illustrate how reoptimization can help in improving the running time of the PTAS in \cite{Li1999}.

\subsection{Improving the PTAS for Reopt-CSP$_{\mathcal{M}_{k^+}}$}

Let us consider an input instance $\mathcal{S^\prime} = \{S_1, \ldots, S_t \}$ for CSP with a given optimal solution for a subset of its sequences. Without loss of generality, suppose we have an optimal solution OPT for the first $r$ sequences in $\mathcal{S}^\prime$, i.e., $\mathcal{S} = \{S_1, \ldots, S_r\}$. Let us also assume that $|\mathcal{S^\prime} \setminus \mathcal{S}| =k$.%

Algorithm \ref{alg:ptas_csp} can provide a feasible solution for \Sp with an approximation ratio of  $1 + \frac{4|\Sigma| - 4}{\sqrt{e} (\sqrt{4r+1}-3)} $ in $O(l\cdot(tn)^{r+1})$. We argue in this section that we can achieve the same approximation ratio in $O(ltn((t-r)n)^{r})$ time using the given OPT for $\mathcal{S}$, as presented in Theorem \ref{thm:imptas}

In the following approximation algorithm, we implement the PTAS in Algorithm \ref{alg:ptas_csp} by using the given optimal solution OPT.

\begin{algorithm}
\caption{Approximation algorithm for Reopt-$CSP_{\mathcal{M}_{k^+}}$ using the K-BEST-ALIGN procedure from Algorithm \ref{alg:k_best_align}.}
\label{alg:imptas}
\noindent \textbf{Input: } Set of sequences $\mathcal{S}^\prime = \{S_1, \ldots, S_t\}$, pattern length $l$, and optimal closest substring OPT of $\mathcal{S} = \{S_1, \ldots, S_r\}$ \\
\textbf{Output:}  Solution $SOL^\prime$ for $\mathcal{S}^\prime$.\\

\begin{algorithmic}[1]
\State{$SOL_A^\prime = \text{K-BEST-ALIGN}(OPT, \{S_{r+1}, \ldots, S_{t}\})$} \\

\State{$SOL_B^\prime = \emptyset $}
\State{$min= \infty$}
\For{\textbf{each} $r$-$sample \in \{ R(\mathcal{S}^\prime) \setminus R(\mathcal{S})\}$ }
\State{$SOL^\prime=$ K-BEST-ALIGN$(SOL, sample)$}

\If{$cost(SOL^\prime) < min$ }
\State{$min = cost(SOL^\prime)$}
\State{$SOL_B^\prime = SOL^\prime$}
\EndIf
\EndFor \\

\If{$cost(SOL^\prime_A) < cost(SOL_B^\prime)$}
\State{\textbf{return} $SOL_A^\prime$}
\Else{}
\State{\textbf{return} $SOL_B^\prime$}
\EndIf
\end{algorithmic} 
\end{algorithm}

\begin{theorem}
\label{thm:imptas}
Algorithm \ref{alg:imptas} is a $1 + \frac{4|\Sigma| - 4}{\sqrt{e} (\sqrt{4r+1}-3)} $ PTAS for  Reopt-CSP$_{\mathcal{M}_k^+}$  which runs in $O(ltn\cdot (t-r)n)^{r})$.
\end{theorem}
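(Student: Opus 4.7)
The plan is to verify the claim in two parts --- the approximation guarantee and the running-time bound --- and to handle separately the two candidate solutions $SOL_A'$ (obtained via K-BEST-ALIGN extending the given OPT) and $SOL_B'$ (the best solution produced by the inner loop over $R(\mathcal{S}')\setminus R(\mathcal{S})$), since Algorithm \ref{alg:imptas} returns whichever of the two has smaller cost.

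For the approximation ratio, my plan is to piggyback on the PTAS analysis of \cite{Li1999} recorded in the first row of Table \ref{table:summary}. That analysis shows, via a sampling argument, that for $OPT'$ of $\mathcal{S}'$ there exists an $r$-sample $Y^{*}\in R(\mathcal{S}')$ whose consensus, when used to pick the best aligning substring in every sequence of $\mathcal{S}'$, yields a solution of cost at most $\left(1+\frac{4|\Sigma|-4}{\sqrt{e}(\sqrt{4r+1}-3)}\right)\cdot cost(OPT')$. I would then split on where $Y^{*}$ lies. If $Y^{*}\in R(\mathcal{S}')\setminus R(\mathcal{S})$, the inner loop of Algorithm \ref{alg:imptas} visits $Y^{*}$, so $cost(SOL_B')$ is at most the PTAS's cost on $Y^{*}$ and the bound is immediate. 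Otherwise $Y^{*}\in R(\mathcal{S})$, in which case $Y^{*}$ is itself a feasible solution for $\mathcal{S}$; since OPT is optimal for $\mathcal{S}$, we have $cost(\mathcal{S},OPT)\le cost(\mathcal{S},Y^{*})$. I would then combine this inequality with the self-reducibility decomposition (Lemma 1, property 2) to write $cost(SOL_A')=cost(\mathcal{S},OPT)+\sum_{j=r+1}^{t} d(v_{opt},y_j)$ and compare it to the analogous decomposition of the PTAS's cost using $v^{*}$ in place of $v_{opt}$ on $\mathcal{S}'$.

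For the running-time bound, the plan is a direct count. Computing $SOL_A'$ by K-BEST-ALIGN on the $k=t-r$ added sequences costs $O(l(t-r)n)$. The inner loop visits $O(((t-r)n)^{r})$ samples, and each iteration invokes K-BEST-ALIGN on the full modified instance of $t$ sequences, costing $O(ltn)$, for a loop cost of $O(ltn\cdot((t-r)n)^{r})$, which dominates both the $SOL_A'$ computation and the final minimum comparison. This matches the stated bound, and the improvement over the original $O(l(tn)^{r+1})$ PTAS comes from replacing one factor of $(tn)^{r}$ by $((t-r)n)^{r}$, i.e., from not re-enumerating $r$-samples lying entirely inside $\mathcal{S}$.

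I expect the main obstacle to be the sub-case $Y^{*}\in R(\mathcal{S})$ of the approximation-ratio argument. There, the PTAS guarantee is purely existential about a sample we no longer enumerate, so we must show that substituting $v^{*}$ by $v_{opt}$ cannot inflate the total Hamming cost on $\mathcal{S}'$ beyond the same approximation ratio. Closing this gap will require (i) using optimality of $v_{opt}$ on $\mathcal{S}$ to absorb the contribution from $S_1,\ldots,S_r$, (ii) using the fact that K-BEST-ALIGN picks the best substring per new sequence with respect to $v_{opt}$ to bound the contribution from $S_{r+1},\ldots,S_{t}$, and (iii) a triangle-inequality argument on Hamming distance relating $d(v_{opt},y_j)$ to $d(v^{*},y_j)$ via $d(v_{opt},v^{*})$ so that the sampling bound on $d(v^{*},v'_{opt})$ carries over.
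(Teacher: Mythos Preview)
Your plan follows the same skeleton as the paper's proof: split $R(\mathcal{S}')$ into $R(\mathcal{S})$ and $R(\mathcal{S}')\setminus R(\mathcal{S})$, let $SOL_B'$ cover the second part by direct enumeration, and claim that $SOL_A'$ covers the first part. Your running-time count is also essentially the paper's (the paper writes the loop count as $O((tn)^r-(rn)^r)$ and then simplifies to $O(((t-r)n)^r)$, arriving at the same overall $O(ltn\cdot((t-r)n)^r)$).

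The one substantive divergence is how the case $Y^{*}\in R(\mathcal{S})$ is handled. The paper does \emph{not} carry out the triangle-inequality programme you sketch in steps (i)--(iii). Instead it disposes of this case in a single sentence: ``Since we have the optimal solution for $\mathcal{S}$, $cost(SOL_A)$ is equal to the minimum over all feasible solutions sampled over $R(\mathcal{S})$,'' and immediately concludes that all of $R(\mathcal{S}')$ is considered. In other words, the paper simply asserts that using $OPT$ (equivalently, its consensus $v_{opt}$) as the sample dominates every other $r$-sample drawn entirely from $\mathcal{S}$ when extended to $\mathcal{S}'$. It offers no argument for why optimality of $v_{opt}$ on $\mathcal{S}$ should transfer to minimality of the \emph{extended} cost on $\mathcal{S}'$, which is precisely the gap you flag as the ``main obstacle.''

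So you have correctly located the delicate step; your proposed remedy already goes beyond what the paper's own proof supplies. If you can make the comparison between $cost(SOL_A')$ and the PTAS cost on an arbitrary $Y^{*}\in R(\mathcal{S})$ go through (or, alternatively, argue directly that the Li et al.\ sampling lemma can always be instantiated with a witness $Y^{*}\notin R(\mathcal{S})$ once $t>r$), you will have a strictly more complete argument than the one in the paper.
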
 \vspace{3pt}

\begin{proof}
To prove that Algorithm \ref{alg:imptas} provides the same approximation ratio as Algorithm \ref{alg:ptas_csp}  from \cite{Li1999}, we need to show that the set of all samples in $R(\Sp)$ is considered in Algorithm \ref{alg:imptas}. $SOL_A$ is obtained by aligning  $OPT$ to $k$ other sequences in \Sp using Algorithm \ref{alg:best_align}. Since we have the optimal solution for \S, $cost(SOL_A)$  is equal to the minimum over all feasible solutions sampled over $R(\S)$. On the other hand,  solution $SOL_B$ is obtained by getting the minimum over all feasible solutions sampled from $R(\Sp) \setminus R(\S)$. The output $SOL^\prime$ is the minimum between $SOL_A$ and $SOL_B$. Thus, all samples from $R(\Sp)$ are considered   in Algorithm \ref{alg:imptas}.

For the running time of Algorithm \ref{alg:imptas}, getting $SOL_A$ will require $O(kn)$ steps. Line 5 will iterate in $O((tn)^r - (rn)^r)$.  Getting the best alignment for each sample will require $O(tn)$ and getting the distance between two $l$-length substrings will take $O(l)$. Thus, Algorithm \ref{alg:imptas} runs in $O(ltn\cdot (t-r)n)^{r})$.
\end{proof}

\section{Conclusion and Future Work}

In this paper, we presented reoptimization variant of the CSP under sequence addition. The additional information, i.e., optimal solution to a smaller instance, does not help to address the intractability of the problem, as shown in Theorem \ref{thm:hard}.

We considered the basic type of modification $\mathcal{M}_+$, where the new instance contains an additional sequence. The general approach in reoptimization is to transform the given optimal solution for it to become feasible to the new instance. For some cases, i.e., $v_{opt} = v^\prime_{opt}$, our simple transformation can lead to $OPT^\prime$, otherwise we can have the worst case scenario where OPT cannot help to obtain $OPT^\prime$. In fact, even for a small modification such as $\mathcal{M}_+$,  $v_{opt}$ can be totally different from $v^\prime_{opt}$.

We generalized this approach by considering the modification $\mathcal{M}_{k^+}$, characterized by adding $k$ new sequences to the original instance. By using the same approach for Reopt-CSP$_{\mathcal{M}_+}$, we can obtain an error that grows  as the number of additional sequences is increasing. However, we showed that by using the given optimal  solution for reoptimization we can improve the running of the existing PTAS for CSP. With the same approximation ratio, we can improve the running time from $O((tn)^r)$ to $O(((t-r)n)^r)$. 

It is natural to consider the opposite scenario, where an optimal solution to a larger instance is given and we are looking for the solution to a smaller subset of the instance. We argue that we are not also guaranteed  to get the optimal solution by transforming the larger solution to provide a feasible solution to the smaller instance. Example \ref{ex:1} can also illustrate the idea. Exploring other techniques to make use of such kind of information is still part of our future work. Moreover, we would like to study other types of modification in the input instance such as changes in pattern length $l$. 

\section{Acknowledgements}
The authors would like to give their utmost gratitude to Dr. Hans-Joachim B\"{o}ckenhauer and Dr. Dennis Komm for patiently reading and proof-reading this paper, to Prof. Juraj Hromkovic for his supervision and valuable inputs.

Ms. Jhoirene Clemente would like to thank the  Engineering Research and Development for Technology (ERDT) for funding her PhD program in University of the Philippines Diliman and the Bases Conversion and Development Authority (BCDA) project for funding her Sandwich program.

\bibliography{library}
\bibliographystyle{plain}
\end{document}